\newtheorem{prop}{Proposition}
\newtheorem{cor}{Corollary}
\newtheorem{remark}{Remark}
\newtheorem{defi}{Definition}
\newtheorem{thm}{Theorem}
\title{A Note on the Conditions for COS Convergence}
\author{
        Qinling Wang\thanks{Delft Institute of Applied Mathematics, Delft University of Technology, 2628 CD Delft, the Netherlands
    (\href{mailto:q.wang-7@tudelft.nl}{q.wang-7@tudelft.nl}).}
    \and
    Xiaoyu Shen\thanks{\href{https://fsquaredquant.nl}{FF Quant Advisory B.V.}, 3531 WR Utrecht, the Netherlands (\href{mailto:xiaoyu.shen@ffquant.nl}{xiaoyu.shen@ffquant.nl}).} 
    \and
    Fang Fang\thanks{\href{https://fsquaredquant.nl}{FF Quant Advisory B.V.}, 3531 WR Utrecht, the Netherlands (\href{mailto:fang.fang@ffquant.nl}{fang.fang@ffquant.nl}) and Delft Institute of Applied Mathematics, Delft University of Technology, 2628 CD Delft, the Netherlands  ( \href{mailto:f.fang@tudelft.nl}{f.fang@tudelft.nl}).}
}
\date{}
\begin{document}

\maketitle

\begin{abstract}
We study the truncation error of the COS method and give simple, verifiable conditions that guarantee convergence. In one dimension, COS is admissible when the density belongs to both L1 and L2 and has a finite weighted L2 moment of order strictly greater than one. We extend the result to multiple dimensions by requiring the moment order to exceed the dimension. These conditions enlarge the class of densities covered by previous analyses and include heavy-tailed distributions such as Student t with small degrees of freedom.
\end{abstract}

\section{Introduction}

Fourier-based methods have become fundamental tools in computational
probability, quantitative finance, and the numerical solution of
integro–differential equations.  
Among these techniques, the COS method of
Fang and Oosterlee \cite{fang2009novel} is particularly attractive due to its exponential
convergence for sufficiently regular densities and its ability to exploit
closed-form characteristic functions.  
The method approximates a target
probability density function (pdf) $f$ on a truncated interval $[-L,L]$
through a cosine series expansion whose coefficients computed from
the characteristic function of $f$.
A central theoretical question is
therefore to understand the behavior of the COS approximation error as both the number
of cosine modes $K$ and the truncation parameter $L$ tend to infinity.

In the existing literature, the convergence of the COS method is typically
established under assumptions that ensure sufficiently fast decay of $f$.  
A widely used sufficient condition is the so-called 
\emph{COS-admissibility} introduced in \cite{junike2022precise}, which requires the tail cosine energy
\[
B(L):=\sum_{k=0}^{\infty} \frac{1}{L} \Bigg| \int_{\mathbb R\setminus[-L,L]} f(x)\cos\Big( \frac{k\pi(x+L)}{2L} \Big)\,dx \Bigg|^2.
\]
to vanish as $L\to\infty$. It has also been proven that if $B(L)\to 0$, then the $L^{2}$-error of the COS
approximation converges to zero as $K,L\to\infty$, providing a clean and
practical criterion for sufficiently regular densities commonly encountered in
applications.  

The purpose of this paper is to provide some weaker conditions under which the PDF is 
COS-admissible. We show that $f$ is COS-admissible as long as
$f\in L^{1}\cap L^{2}$ and it has a finite $p$-th \emph{square-integrable}
moment $\int |x|^{p}|f(x)|^{2}<\infty$ for some $p > 1$. And we can extend this analysis 
to higher dimensions. This enlarges the set of distributions which are COS-admissible, making COS method 
a reliable method for many distributions and financial models.


\section{Review of the COS method and the COS-admissibility}

We recall the framework of the COS method (in one dimension) given in \cite{fang2009novel} and the condition for the COS method to converge given in \cite{junike2022precise}.

Consider a pdf $f$ and a sufficiently large interval $[a,b] \subset \mathbb{R}$, the Fourier-cosine series expansion can be given by
\[
f(x) \approx {\sum\nolimits\!'}_{k=0}^\infty A_k \cos\!\left( k\pi \frac{x-a}{b-a}\right) \quad x \in (a,b)
\]
with
\[
A_k =\frac{2}{b-a} \int_a^b f(x) \cos \left(k \pi \frac{x-a}{b-a}\right) d x.
\]

Assume $[a,b]$ is chosen such that the truncated integral can approximate the infinite counterpart very well, i.e.
\[
\phi_1 (\omega) \coloneqq \int_a^b e^{i \omega x} f(x) dx \approx \int_{\mathbb{R}} e^{i \omega x} f(x) dx = \phi(\omega).
\]

We note that $A_k$ can be computed by $\phi_1$:
\[
A_k = \frac{2}{b-a} \Re \left[ \phi_1(\frac{k \pi}{b-a}) \cdot \exp(-i \frac{k a \pi}{b-a}) \right],
\]
replace $\phi_1$ with $\phi$, we can approximate $A_k$ by
\[
F_k \coloneqq \frac{2}{b-a} \Re \left[ \phi(\frac{k \pi}{b-a}) \cdot \exp(-i \frac{k a \pi}{b-a}) \right].
\]
Replace $A_k$ by $F_k$ and truncate the Fourier-cosine series, we can approximate the pdf $f$ with
\[
f_1^N(x) \coloneqq {\sum\nolimits\!'}_{k=0}^{N-1} F_k \cos\!\left( k\pi \frac{x-a}{b-a}\right).
\]

To make things easier, we would only consider the case where $a = -L$ and $b = L$ for some $L > 0$ in the following part of the paper.

It is important to ask, under which condition, can we say that $f_1^N$ will converge to $f$ (in some sense of convergence) as both $N$ and $L$ go in infinity? It is answered in \cite{junike2022precise}, where it has been proven that when a pdf is COS-admissible, the COS method will converge in $L^2$.

\begin{defi}
    A function $f \in L^1$ is called COS-admissible if
    \[
    B(L) \coloneqq \sum_{k=0}^{\infty} \frac{1}{L} \Bigg| \int_{\mathbb R\setminus[-L,L]} f(x)\cos\Big( \frac{k\pi(x+L)}{2L} \Big)\,dx \Bigg|^2 \to 0 \text{ as } L \to \infty.
    \]
\end{defi}

\begin{thm}
Assume $f \in L^1 \cap L^2$ to be COS-admissible, then
\[
\lim_{L \to \infty} \limsup_{N \to \infty} \| f - f_1^N \|_2 = 0.
\]
\end{thm}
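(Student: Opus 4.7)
The plan is to decompose the error into three manageable pieces and control each using the standing hypotheses. Extending $f_1^N$ by zero outside $[-L,L]$, I would start from the orthogonal split
\[
\|f - f_1^N\|_{L^2(\mathbb{R})}^2 = \int_{|x|>L}|f(x)|^2\,dx + \|f - f_1^N\|_{L^2[-L,L]}^2.
\]
The tail $\int_{|x|>L}|f|^2$ vanishes as $L \to \infty$ since $f \in L^2(\mathbb{R})$. For the interval piece, I would insert the intermediate approximant
\[
f_2^N(x) := {\sum\nolimits\!'}_{k=0}^{N-1} A_k \cos\!\Big(k\pi\tfrac{x+L}{2L}\Big),
\]
which uses the \emph{exact} cosine coefficients $A_k$ rather than their Fourier proxies $F_k$, and apply the triangle inequality
\[
\|f - f_1^N\|_{L^2[-L,L]} \leq \|f - f_2^N\|_{L^2[-L,L]} + \|f_2^N - f_1^N\|_{L^2[-L,L]}.
\]

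The first term is the classical truncation error of the cosine partial sum of $f$ restricted to $[-L,L]$. Since the cosine system $\{\cos(k\pi(x+L)/(2L))\}_{k \geq 0}$ is a complete orthogonal basis of $L^2[-L,L]$ and $f \in L^2(\mathbb{R})$ forces $f \in L^2[-L,L]$, standard Hilbert-space theory gives $\|f - f_2^N\|_{L^2[-L,L]} \to 0$ as $N \to \infty$ for every fixed $L$.

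For the second term, the key is the identity
\[
F_k - A_k = \frac{1}{L}\int_{\mathbb{R}\setminus[-L,L]} f(x)\cos\!\Big(k\pi\tfrac{x+L}{2L}\Big)\,dx,
\]
which I would derive by writing $\phi = \phi_1 + (\phi-\phi_1)$ in the definition of $F_k$ and simplifying the real part of the resulting complex exponential. Parseval on the cosine basis then yields
\[
\|f_2^N - f_1^N\|_{L^2[-L,L]}^2 \leq L\sum_{k=0}^{\infty}(A_k - F_k)^2 = B(L),
\]
which tends to $0$ as $L \to \infty$ directly by COS-admissibility.

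Combining the three bounds gives
\[
\limsup_{N\to\infty}\|f - f_1^N\|_2^2 \leq \int_{|x|>L}|f(x)|^2\,dx + B(L),
\]
and sending $L \to \infty$ closes the argument. I expect no serious obstacle: the only substantive step is the explicit formula for $F_k - A_k$, which is a bookkeeping exercise with the complex exponentials defining $\phi$ and $\phi_1$; once that identity is in hand, the rest reduces to Parseval, completeness of the cosine system, and the hypothesis $B(L) \to 0$.
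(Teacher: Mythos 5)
Your proof is correct, and it is essentially the standard argument for this result: the paper itself states this theorem without proof (citing Junike--Pankrashkin), and your decomposition --- outer tail plus exact-coefficient partial sum plus the coefficient-perturbation term, with the identity $F_k-A_k=\tfrac1L\int_{\mathbb R\setminus[-L,L]}f(x)\cos\bigl(k\pi\tfrac{x+L}{2L}\bigr)\,dx$ turning the Parseval bound into exactly $B(L)$ --- is precisely how that reference proves it. All the individual steps check out, including the harmless $k=0$ normalization (the primed sum only shrinks that term, so the inequality $\|f_2^N-f_1^N\|^2_{L^2[-L,L]}\le L\sum_k(A_k-F_k)^2=B(L)$ holds).
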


It is usually not easy to check if a pdf $f$ is COS-admissible directly. Authors of \cite{junike2022precise} gave a condition under which the pdf is COS-admissible. 
We would extend this result a bit so that more distributions can be covered and compare these two conditions in the Remark \ref{remark_1}.

\section{Generalized Moment Bound}

We derive a decay rate depending on available square-integrable moments of $f$.

\begin{prop}[Moment-based COS bound]\label{prop:moment}
Let $p>1$. Suppose $f\in L^1(\mathbb R)\cap L^2(\mathbb R)$ and $|x|^{p/2}f(x)\in L^2(\mathbb R)$ (equivalently $\int_{\mathbb R}|x|^{p}|f(x)|^2dx<\infty$). Then $f$ is COS-admissible and
\begin{equation}\label{eq:mainBound}
 B(L) \le 2\,\zeta(p)\Bigg( L^{-p}\int_{|x|>L} |x|^{p}|f(x)|^2 dx + \int_{|x|>L} |f(x)|^2 dx \Bigg),\quad L>0.
\end{equation}
In particular the tail-sensitive rate bound
\begin{equation}\label{eq:tailRate}
 B(L) \le 4\,\zeta(p)\,L^{-p}\int_{|x|>L}|x|^{p}|f(x)|^2 dx
\end{equation}
holds, and since $\int_{|x|>L}|x|^{p}|f|^2 \le \int_{\mathbb R}|x|^{p}|f|^2$ we obtain the uniform bound
\begin{equation}\label{eq:uniformRate}
 B(L) \le 4\,\zeta(p)\,L^{-p}\int_{\mathbb R}|x|^{p}|f(x)|^2dx.
\end{equation}
Consequently $B(L)=O(L^{-p})$ and more precisely
\[
 B(L) \le 2\,\zeta(p)\,L^{-p}\int_{\mathbb R}|x|^{p}|f(x)|^2dx + o(L^{-p}).
\]
\end{prop}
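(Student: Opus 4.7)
The plan is to fold the tail integral over $|x|>L$ into cosine Fourier coefficients of $f$ on translated copies of $[-L,L]$, and then use a moment-weighted Cauchy--Schwarz to aggregate them. I would partition $\{|x|>L\}=\bigcup_{n\geq 1}(I_n^+\cup I_n^-)$ with $I_n^{\pm}=\pm[(2n-1)L,(2n+1)L]$, and on $I_n^+$ use the substitution $x=(2n-1)L+2Lu$, $u\in[0,1]$, together with the identity $\cos(k\pi(x+L)/(2L))=(-1)^{kn}\cos(k\pi u)$ (and its analogue with sign $(-1)^{k(n-1)}$ on $I_n^-$). Writing $\alpha_{n,k}$ and $\beta_{n,k}$ for the rescaled $k$-th cosine coefficients of $f$ restricted to $I_n^+$ and $I_n^-$ respectively, one obtains
\[
c_k := \int_{|x|>L}f(x)\cos\!\left(\tfrac{k\pi(x+L)}{2L}\right)dx = \sum_{n\geq 1}(-1)^{kn}\bigl(\alpha_{n,k}+(-1)^k\beta_{n,k}\bigr).
\]

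Next I would apply Cauchy--Schwarz with the weights $n^p$ (legitimate because $p>1$ makes $\sum_{n\geq 1}n^{-p}=\zeta(p)<\infty$) together with $(a\pm b)^2\leq 2(a^2+b^2)$ to get
\[
|c_k|^2 \leq 2\zeta(p)\sum_{n\geq 1}n^p\bigl(|\alpha_{n,k}|^2+|\beta_{n,k}|^2\bigr).
\]
Swapping the order of summation and invoking Parseval's identity on $[0,1]$, which gives $\sum_{k\geq 0}|\alpha_{n,k}|^2\leq 2L\int_{I_n^+}|f|^2\,dx$ and the analogous bound for $\beta_{n,k}$, this becomes
\[
\sum_{k\geq 0}|c_k|^2 \leq 4\zeta(p)\,L\sum_{n\geq 1}n^p\int_{I_n^+\cup I_n^-}|f|^2\,dx.
\]
The pointwise inequality $n\leq|x|/L$ on $I_n^\pm$ (from $|x|\geq(2n-1)L\geq nL$ for $n\geq 1$) converts $n^p$ to $(|x|/L)^p$ and, after dividing by $L$, yields \eqref{eq:tailRate}. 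The split \eqref{eq:mainBound} follows by isolating the boundary copy $n=1$: since the weight $1^p$ provides no moment gain there, one applies Parseval alone on $I_1^{\pm}$ (contributing the $\int_{|x|>L}|f|^2$ term) and reserves the weighted Cauchy--Schwarz for $n\geq 2$. The remaining bound \eqref{eq:uniformRate} and the $o(L^{-p})$ refinement then follow from \eqref{eq:mainBound} via the elementary $\int_{|x|>L}|f|^2\leq L^{-p}\int_{|x|>L}|x|^p|f|^2$ (valid since $|x|>L$ forces $1\leq(|x|/L)^p$) and dominated convergence as $L\to\infty$.

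The main obstacle is the alternating sign $(-1)^{kn}$ coupling the $k$- and $n$-indices in the decomposition of $c_k$; a careless application of Cauchy--Schwarz could destroy the required decoupling. The cleanest workaround is to split $\sum_{k\geq 0}$ by parity: for even $k$ the factor $(-1)^{kn}$ collapses to $1$, and for odd $k$ to $(-1)^n$, so the weighted Cauchy--Schwarz over $n$ decouples from $k$ in either case. The hypothesis $p>1$ enters precisely once, as the summability requirement for the weights $\{n^{-p}\}$, which is also exactly what is needed for $\zeta(p)$ to appear as the multiplicative constant in \eqref{eq:mainBound}.
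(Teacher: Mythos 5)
Your proposal is correct and follows essentially the same route as the paper's proof: partition the tails into translated blocks of length $2L$, apply Cauchy--Schwarz with weights $n^{\mp p/2}$ (whence the factor $\zeta(p)$ and the need for $p>1$), use blockwise Parseval, and compare the block index $n$ with $|x|/L$. The only substantive difference is how \eqref{eq:mainBound} is obtained: instead of isolating the $n=1$ block (which, done literally by splitting the sum before Cauchy--Schwarz, costs an extra factor $2$ and degrades the constant), the paper keeps all blocks together and uses $n\le(|x|/L+1)/2$ with the convexity bound $n^p\le\tfrac12\big((|x|/L)^p+1\big)$ to produce the two terms of \eqref{eq:mainBound} with the stated constant $2\zeta(p)$; also, your parity-splitting workaround for the signs $(-1)^{kn}$ is unnecessary, since Cauchy--Schwarz is applied to absolute values and the signs play no role.
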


\begin{proof}
It suffices to bound the contributions from $x> L$ and $x< -L$ symmetrically. Define for $k\ge 0$ the positive tail integrals
\[
 I_k^+(L):=\int_{L}^{\infty} f(x)\cos\Big(\frac{k\pi(x+L)}{2L}\Big)dx.
\]
Partition $[L,\infty)$ into disjoint blocks of length $2L$:
\[
 I_j:=[2jL-L,\,2jL+L],\qquad j\ge 1.
\]
Then by absolute convergence ($f\in L^1$)
\[
 I_k^+(L)=\sum_{j=1}^{\infty} \int_{I_j} f(x)\cos\Big(\frac{k\pi(x+L)}{2L}\Big)dx.
\]
Apply weighted Cauchy--Schwarz with weights $j^{-p/2}$ and $j^{p/2}$:
\begin{align*}
 |I_k^+(L)|^2 &= \Bigg| \sum_{j=1}^{\infty} j^{-p/2}\, j^{p/2}\int_{I_j} f(x)\cos\Big(\frac{k\pi(x+L)}{2L}\Big)dx \Bigg|^2 \\
 &\le \Big(\sum_{j=1}^{\infty} j^{-p}\Big)\Bigg(\sum_{j=1}^{\infty} j^{p} \Big| \int_{I_j} f(x)\cos\Big(\frac{k\pi(x+L)}{2L}\Big)dx \Big|^2 \Bigg) \\
 &= \zeta(p) \sum_{j=1}^{\infty} j^{p} \Big| \int_{I_j} f(x)\cos\Big(\frac{k\pi(x+L)}{2L}\Big)dx \Big|^2.
\end{align*}
Summing in $k$ and using blockwise Parseval (orthogonality of the cosine family on an interval of length $2L$; any normalization discrepancy for $k=0$ is absorbed into a factor $\le 2$) yields
\[
 \sum_{k=0}^{\infty} \frac{1}{L}|I_k^+(L)|^2 \le 2\zeta(p)\sum_{j=1}^{\infty} j^{p} \int_{I_j}|f(x)|^2 dx.
\]
For $x\in I_j$ we have $(2j-1)L\le x\le (2j+1)L$, hence $j\le (x/L+1)/2$ and therefore
\begin{equation}\label{eq:jx-bound}
 j^{p} \le 2^{-p}(x/L+1)^p \le \tfrac12\big( (x/L)^p + 1 \big).
\end{equation}
Here we used the convexity inequality $(a+b)^p \le 2^{p-1}(a^p+b^p)$ for $a,b\ge 0$ and $p>1$.
Thus
\begin{align*}
 \sum_{j=1}^{\infty} j^{p} \int_{I_j}|f(x)|^2 dx &\le \tfrac12 L^{-p}\int_{x\ge L} |x|^{p}|f(x)|^2 dx + \tfrac12\int_{x\ge L} |f(x)|^2 dx.
\end{align*}
Combining gives for the positive tail
\[
 \sum_{k=0}^{\infty} \frac{1}{L}|I_k^+(L)|^2 \le \zeta(p)\Big( L^{-p}\int_{x\ge L} |x|^{p}|f(x)|^2 dx + \int_{x\ge L} |f(x)|^2 dx \Big).
\]
The negative tail $x< -L$ is handled identically (replace $x$ by $-x$), producing the same bound. Adding the two contributions yields \eqref{eq:mainBound}.

To obtain the rate, note that for $|x|>L$ we have $|x|^{p}\ge L^{p}$, hence
\[
 \int_{|x|>L} |f(x)|^2 dx \le L^{-p} \int_{|x|>L} |x|^{p}|f(x)|^2 dx \le L^{-p} \int_{\mathbb R} |x|^{p}|f(x)|^2 dx.
\]
Applying the same bound to the first term in \eqref{eq:mainBound} shows
\[
 B(L) \le 4\zeta(p) L^{-p} \int_{|x|>L} |x|^{p}|f(x)|^2 dx,
\]
which is \eqref{eq:tailRate}. Dropping the restriction $|x|>L$ gives \eqref{eq:uniformRate}. Thus $B(L)=O(L^{-p})$ with explicit constant $4\zeta(p)\int_{\mathbb R}|x|^{p}|f(x)|^2dx$. (Refining $k=0$ normalization can reduce the factor.) Since $|x|^{p}|f(x)|^2\in L^1$, the bound also implies $B(L)\to 0$.
\end{proof}

\begin{cor}[Bounded density with finite $(1+\varepsilon)$ moment]
Let $f$ be a bounded pdf, i.e. $0 \le f \le M$ for some $M > 0$.  
Assume that for some $\varepsilon>0$,
\[
\int_{\mathbb R} |x|^{\,1+\varepsilon}\,f(x)\,dx = m < \infty.
\]
Then $f$ satisfies the moment condition of Proposition~\ref{prop:moment}, and is 
therefore COS-admissible with explicit rate
\[
B(L) \;\le\; 4\zeta(1+\varepsilon) M m L^{-1-\varepsilon}.
\]
\end{cor}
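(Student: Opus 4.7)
My plan is to derive the corollary as a direct consequence of Proposition~\ref{prop:moment} by verifying its hypotheses with $p=1+\varepsilon$ and then tracking constants. The whole argument rests on a single trick: when $f$ is bounded, the pointwise inequality $f(x)^2 \le M f(x)$ lets me trade one power of $f$ for a uniform constant, thereby converting $L^2$ quantities into $L^1$ quantities that are given in the hypothesis.

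First, I will check that $f\in L^1\cap L^2$. Membership in $L^1$ is immediate since $f$ is a probability density. For $L^2$, the bound $f\le M$ gives $\int f^2 \le M\int f = M<\infty$. Next, I will verify the weighted $L^2$ moment condition of Proposition~\ref{prop:moment} with $p=1+\varepsilon$, which asks that $\int_{\mathbb R} |x|^{1+\varepsilon}|f(x)|^2 dx$ be finite. Using $f(x)^2 \le M f(x)$ again,
\[
\int_{\mathbb R} |x|^{1+\varepsilon}|f(x)|^2\,dx \;\le\; M\int_{\mathbb R} |x|^{1+\varepsilon}f(x)\,dx \;=\; M m \;<\; \infty.
\]

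With the hypotheses of Proposition~\ref{prop:moment} confirmed, I will apply the uniform rate bound \eqref{eq:uniformRate}, namely $B(L)\le 4\zeta(p)\,L^{-p}\int_{\mathbb R}|x|^{p}|f(x)|^2dx$, with $p=1+\varepsilon$. Plugging in the estimate just obtained yields
\[
B(L)\;\le\;4\zeta(1+\varepsilon)\,L^{-1-\varepsilon}\cdot M m,
\]
which is the claimed inequality. COS-admissibility then follows from $B(L)\to 0$ as $L\to\infty$, since $1+\varepsilon>0$.

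There is no real obstacle here; the only thing to be careful about is the reduction step $f^2\le M f$, which requires $f\ge 0$ (true for a pdf) and boundedness (given). Everything else is a bookkeeping application of Proposition~\ref{prop:moment}, so the corollary should follow in a few lines.
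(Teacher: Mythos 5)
Your proposal is correct and follows essentially the same route as the paper: apply Proposition~\ref{prop:moment} with $p=1+\varepsilon$ and use $f^2\le Mf$ to reduce the weighted $L^2$ moment to the given $L^1$ moment $m$ (the paper routes through the tail bound \eqref{eq:tailRate} before enlarging to the full integral, while you invoke \eqref{eq:uniformRate} directly, but this is an immaterial difference). Your explicit verification that $f\in L^1\cap L^2$ is a small welcome addition the paper leaves implicit.
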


\begin{proof}[Proof of Corollary]
Apply Proposition~\ref{prop:moment} with $p=1+\varepsilon$. Boundedness gives $|x|^{\,1+\varepsilon} f(x)^2 \le M |x|^{\,1+\varepsilon} f(x)$. For $|x|>L$, we have
\[
\begin{aligned}
B(L) \le & 4\zeta(1+\varepsilon) L^{-1-\varepsilon} \int_{|x|>L} |x|^{\,1+\varepsilon}|f(x)|^2 dx \\
\le & 4\zeta(1+\varepsilon) L^{-1-\varepsilon} M \int_{|x|>L} |x|^{\,1+\varepsilon}f(x) dx \\
\leq & 4\zeta(1+\varepsilon) M m L^{-1-\varepsilon}.
\end{aligned}
\]
\end{proof}

\begin{remark}\label{remark_1}
\begin{itemize}
 \item Finite first and second moments alone do not suffice: one can construct smooth probability densities whose $\int |x| f$ and $\int x^2 f$ converge while $\int f^2=\infty$, invalidating Parseval-based bounds. Boundedness (or some other strong conditions such as monotone tail decay) prevents such spike constructions.
 \item In \cite{junike2022precise}, the authors proved that a density $f$ is 
COS-admissible under the assumptions
\[
    f \in L^{1}(\mathbb{R}) \cap L^{2}(\mathbb{R})
    \qquad\text{and}\qquad
    |x|^{p/2} f(x) \in L^{2}(\mathbb{R}) \quad \text{for } p = 2.
\]
In the present work, we extend this result by showing that COS-admissibility 
already holds under the weaker requirement $p>1$. 
This relaxation is important, since many relevant distributions satisfy our 
condition but not the stronger condition in \cite{junike2022precise}. 
For example, a Student--$t$ density with degrees of freedom 
$0 < \nu \le \tfrac{1}{2}$ does not satisfy the condition for $p=2$,
whereas it does satisfy the condition for some $p>1$. 
This is because the density of a Student--$t(\nu)$ distribution satisfies 
$f(x)\sim C |x|^{-(\nu+1)}$ as $|x|\to\infty$, so
$|x|^{2} f(x)^2 \sim |x|^{-2\nu}$ is non--integrable when $\nu \le 1/2$,
while $|x|^{p} f(x)^2 \sim |x|^{-2\nu-2+p}$ is integrable for some $p>1$ whenever $\nu>0$.

Hence, our result provides theoretical guarantees of COS-admissibility 
for a substantially broader class of distributions.

\end{itemize}

\end{remark}

\section{Multidimensional Extension}

The multi-dimensional COS method is introduced in \cite{ruijter2012two}. And the similar definition and condition of the COS-admissibility for the multi-dimensional case is given in \cite{junike2023characteristic}. 
We can also give a better condition under which the COS-admissibility is guaranteed for the multi-dimensional COS.

Let $d\ge 1$ and $f: \mathbb R^d\to\mathbb R$ with $f\in L^1(\mathbb R^d)\cap L^2(\mathbb R^d)$. Assume 
\[
\int_{\mathbb R^d} |x|^{p}|f(x)|^{2}dx<\infty\quad\text{for some }p>d.
\]
For $L>0$ define
\[
 B_d(L):=\sum_{\mathbf k\in \mathbb N_0^d} \frac{1}{L^{d}}\Bigg| \int_{\mathbb R^d\setminus [-L,L]^d} f(x) \prod_{i=1}^d \cos\Big( \frac{k_i\pi(x_i+L)}{2L} \Big) dx \Bigg|^2.
\]
We say that $f$ is $d$-dimensional COS-admissible if $B_d(L)\to 0$ as $L\to\infty$.

\begin{prop}[Weighted $d$-dimensional COS bound]\label{prop:moment-d}

Let $d\ge 1$ and let 
\[
f:\mathbb{R}^{d}\to\mathbb{R},\qquad 
f\in L^{1}(\mathbb{R}^{d})\cap L^{2}(\mathbb{R}^{d}).
\]
Assume that for some $p>d$,
\[
\int_{\mathbb{R}^{d}} |x|^{p}\,|f(x)|^{2}\,dx <\infty .
\]
Then $f$ is $d$-dimensional COS-admissible: $B_{d}(L)\to 0$ as $L\to\infty$.  
More precisely,
\[
B_{d}(L)
\le
C_{d,p}
L^{-p} \!\!\int_{\mathbb{R}^{d}\setminus[-L,L]^{d}} |x|^p |f(x)|^{2}\,dx.
\]
where
\[
C_{d,p}=2^{d-1} (1 + d^{p/2})\sum_{m\in\mathbb{Z}^{d}\setminus\{0\}}|m|^{-p} < \infty .
\]
\end{prop}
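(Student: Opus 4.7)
The plan is to mirror the block-decomposition proof of Proposition~\ref{prop:moment} but in $d$ dimensions. The main new analytic point is that the finite weight-series is now $\sum_{\mathbf{m}\in\mathbb{Z}^{d}\setminus\{0\}} |\mathbf{m}|^{-p}$, which converges exactly when $p>d$; this is precisely why the one-dimensional threshold $p>1$ is upgraded to $p>d$. Everything else is a coordinatewise lift of the 1D argument.

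Concretely, I would partition $\mathbb{R}^{d}\setminus[-L,L]^{d}$ into translated cubes of side $2L$,
\[
 I_{\mathbf{m}} := \prod_{i=1}^{d}\bigl[(2m_i-1)L,\,(2m_i+1)L\bigr],\qquad \mathbf{m}\in\mathbb{Z}^{d}\setminus\{0\},
\]
so that each tail integral decomposes as $\sum_{\mathbf{m}\ne 0}\int_{I_{\mathbf{m}}} f(x)\prod_{i}\cos(\cdot)\,dx$. A weighted Cauchy--Schwarz with weights $|\mathbf{m}|^{-p/2}$ and $|\mathbf{m}|^{p/2}$ then yields
\[
 \Bigg|\int_{\mathbb{R}^{d}\setminus[-L,L]^{d}} f\prod_{i}\cos(\cdot)\,dx\Bigg|^{2} \le \Big(\!\sum_{\mathbf{m}\ne 0}|\mathbf{m}|^{-p}\Big)\sum_{\mathbf{m}\ne 0}|\mathbf{m}|^{p}\,\Bigg|\int_{I_{\mathbf{m}}} f\prod_{i}\cos(\cdot)\,dx\Bigg|^{2},
\]
whose first factor is finite precisely because $p>d$.

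Next I would sum over $\mathbf{k}\in\mathbb{N}_{0}^{d}$ with the factor $L^{-d}$ and apply a tensorized Parseval on each cube $I_{\mathbf{m}}$. The substitution $y=x-2L\mathbf{m}$ converts the cosines anchored at $-L$ into the natural cube-adapted basis up to signs $(-1)^{k_i m_i}$ that vanish under modulus squared, yielding a bound of the shape $C_d\int_{I_{\mathbf{m}}}|f|^{2}\,dx$ where $C_d$ absorbs the $k_i=0$ normalization discrepancies over the $d$ coordinates. To turn $|\mathbf{m}|^{p}$ into $|x|^{p}$ I would use that on $I_{\mathbf{m}}$ (with $\mathbf{m}\ne 0$) the coordinate $i^{*}$ realizing $|\mathbf{m}|_{\infty}$ satisfies $|x_{i^{*}}|\ge |\mathbf{m}|_{\infty}L\ge L|\mathbf{m}|/\sqrt{d}$, so $|\mathbf{m}|^{p}\le d^{p/2}L^{-p}|x|^{p}$; coupled with the convexity bound $(a+b)^{p}\le 2^{p-1}(a^{p}+b^{p})$ used exactly as in the 1D proof (to absorb the `$+1$' slack near the boundary of the cube), this produces the stated constant $C_{d,p}=2^{d-1}(1+d^{p/2})\sum_{\mathbf{m}\ne 0}|\mathbf{m}|^{-p}$.

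The delicate step I expect to be the main obstacle is the blockwise Parseval: the cosines appearing in $B_{d}(L)$ are anchored at $-L$ rather than at the centers $2L\mathbf{m}$ of the cubes $I_{\mathbf{m}}$, so the translation and the associated sign cancellation must be tracked cleanly in each of the $d$ coordinates, and the $k_i=0$ normalization factor must be propagated correctly through the full tensor product; matching the constant $2^{d-1}$ (rather than the looser $2^{d}$) requires a small amount of care. Once this bookkeeping is in place, the rest is a direct coordinatewise extension of the one-dimensional argument.
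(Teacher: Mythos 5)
Your proposal follows essentially the same route as the paper's proof: the same partition of $\mathbb{R}^{d}\setminus[-L,L]^{d}$ into cubes of side $2L$, the same weighted Cauchy--Schwarz with weights $|\mathbf m|^{\pm p/2}$ (with $\sum_{\mathbf m\ne 0}|\mathbf m|^{-p}<\infty$ precisely for $p>d$), the same tensorized blockwise Parseval contributing a factor $2^{d}$, and the same comparison of $|\mathbf m|$ with $|x|/L$ on each cube via the convexity inequality. The only cosmetic difference is that the factor $2^{d-1}$ in $C_{d,p}$ comes from the $\tfrac12$ in the bound $|\mathbf m|^{p}\le\tfrac12\big((|x|/L)^{p}+d^{p/2}\big)$ multiplying the Parseval constant $2^{d}$, not from a sharper Parseval count.
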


\begin{proof}
We first decompose the complement of the cube $[-L,L]^{d}$ into disjoint cubes of side length $2L$.  
For each integer vector $m=(m_{1},\dots,m_{d})\in\mathbb{Z}^{d}$, define
\[
Q_{m}
=\prod_{i=1}^{d}[(2m_{i}-1)L,\,(2m_{i}+1)L].
\]
Then $Q_{0}=[-L,L]^{d}$, and all other cubes tile the complement:
\[
\mathbb{R}^{d}\setminus[-L,L]^{d}
=\bigsqcup_{m\in\mathbb{Z}^{d}\setminus\{0\}} Q_{m}.
\]

For $\mathbf{k}=(k_{1},\dots,k_{d})\in\mathbb{N}_{0}^{d}$ define
\[
I_{\mathbf{k}}(L)
=\int_{\mathbb{R}^{d}\setminus[-L,L]^{d}} f(x)
\prod_{i=1}^{d}\cos\!\left(\frac{k_{i}\pi(x_{i}+L)}{2L}\right)\,dx.
\]
Using the partition,
\[
I_{\mathbf{k}}(L)
=\sum_{m\in\mathbb{Z}^{d}\setminus\{0\}}
\int_{Q_{m}} f(x)
\prod_{i=1}^{d}\cos\!\left(\frac{k_{i}\pi(x_{i}+L)}{2L}\right)\,dx.
\]

Let $|m| \coloneqq (m_1^2 + \cdots + m_d^2)^{1/2}$ denote the Euclidean norm of of $m$. Apply Cauchy--Schwarz with weights $|m|^{-p/2}$ and $|m|^{p/2}$:
\begin{align*}
|I_{\mathbf{k}}(L)|^{2}
&=\left|
\sum_{m\in\mathbb{Z}^{d}\setminus\{0\}}
|m|^{-p/2}\;
\Big(|m|^{p/2}\!\int_{Q_{m}}f(x)
\prod_{i=1}^{d}\cos\!\left(\frac{k_{i}\pi(x_{i}+L)}{2L}\right)\,dx\Big)
\right|^{2}
\\
&\le 
\left(\sum_{m\in\mathbb{Z}^{d}\setminus\{0\}}|m|^{-p}\right)
\left(
\sum_{m\in\mathbb{Z}^{d}\setminus\{0\}}
|m|^{p}
\left|
\int_{Q_{m}} f(x)
\prod_{i=1}^{d}\cos\!\left(\frac{k_{i}\pi(x_{i}+L)}{2L}\right)\,dx
\right|^{2}
\right).
\end{align*}

Let 
\[
S_{d,p} \coloneqq \sum_{m\in\mathbb{Z}^{d}\setminus\{0\}}|m|^{-p},
\qquad S_{d,p}<\infty \ (p>d).
\]
Then
\begin{equation}\label{eq:I-k-bound}
|I_{\mathbf{k}}(L)|^{2}
\le S_{d,p}
\sum_{m\in\mathbb{Z}^{d}\setminus\{0\}}
|m|^{p}
\left|
\int_{Q_{m}} f(x)
\prod_{i=1}^{d}\cos\!\left(\frac{k_{i}\pi(x_{i}+L)}{2L}\right)\,dx
\right|^{2}.
\end{equation}

Next we sum over $\mathbf{k}$ and use Parseval on each cube $Q_{m}$.
After translation, $Q_{m}$ becomes $[0,2L]^{d}$, and the functions
\[
\prod_{i=1}^{d}\cos\!\left(\frac{k_{i}\pi t_{i}}{2L}\right),
\qquad \mathbf{k}\in \mathbb{N}_{0}^{d},
\]
form an orthogonal system in $L^{2}([0,2L]^{d})$.  
The normalization constants differ for $k_{i}=0$, but each such factor is bounded by $2$, 
so the product across $d$ dimensions contributes at most $2^{d}$.  
Thus Parseval gives
\begin{equation}\label{eq:Parseval-d}
\sum_{\mathbf k\in \mathbb N_0^d}
\frac{1}{L^{d}}
\left|
\int_{Q_{m}}
f(x)\prod_{i=1}^{d}\cos\!\left(\frac{k_{i}\pi(x_{i}+L)}{2L}\right)\,dx
\right|^{2}
\ \le\
2^{d}
\int_{Q_{m}} |f(x)|^{2}\,dx .
\end{equation}

Multiply \eqref{eq:I-k-bound} by $L^{-d}$ and sum over $\mathbf{k}$, then apply \eqref{eq:Parseval-d}:
\begin{align*}
B_{d}(L)
&=
\sum_{\mathbf k\in \mathbb N_0^d}\frac{|I_{\mathbf{k}}(L)|^{2}}{L^{d}}
\\
&\le
S_{d,p}
\sum_{m\in\mathbb{Z}^{d}\setminus\{0\}}
|m|^{p}
\sum_{\mathbf k\in \mathbb N_0^d}
\frac{1}{L^{d}}
\left|
\int_{Q_{m}}
f(x)\prod_{i=1}^{d}\cos\!\left(\frac{k_{i}\pi(x_{i}+L)}{2L}\right)\,dx
\right|^{2}
\\
&\le
2^{d} S_{d,p}
\sum_{m\in\mathbb{Z}^{d}\setminus\{0\}}
|m|^{p}\int_{Q_{m}} |f(x)|^{2}dx .
\end{align*}

We now compare $|m|$ and $|x|$ for $x\in Q_{m}$.  
By geometry of cubes,
\[
(2|m|-\sqrt{d})L
\ \le\
|x|
\ \le\
(2|m|+\sqrt{d})L .
\]
Hence
\[
|m|
\ \le\
\frac{|x|/L+\sqrt d}{2},
\qquad
|m|^{p}
\le \frac12\Big((|x|/L)^{p}+d^{p/2}\Big),
\]
using $(a+b)^{p}\le 2^{p-1}(a^{p}+b^{p})$.

Therefore
\[
\sum_{m\in\mathbb{Z}^{d}\setminus\{0\}}|m|^{p}\int_{Q_{m}} |f(x)|^{2}dx
\le
\frac12 L^{-p}\!\!\int_{\mathbb{R}^{d}\setminus[-L,L]^{d}}|x|^{p}|f(x)|^{2}\,dx
+\frac12 d^{p/2}\!\!\int_{\mathbb{R}^{d}\setminus[-L,L]^{d}}|f(x)|^{2}\,dx .
\]

Combining,
\[
\begin{aligned}
B_{d}(L)
\le &
2^{d}S_{d,p}
\left[
\frac12 L^{-p}\!\!\int_{\mathbb{R}^{d}\setminus[-L,L]^{d}}|x|^{p}|f(x)|^{2}\,dx
+\frac12 d^{p/2}\!\!\int_{\mathbb{R}^{d}\setminus[-L,L]^{d}}|f(x)|^{2}\,dx
\right]\\
\le &
2^{d}S_{d,p}
\left[
\frac12 L^{-p}\!\!\int_{\mathbb{R}^{d}\setminus[-L,L]^{d}}|x|^{p}|f(x)|^{2}\,dx
+\frac12 d^{p/2} L^{-p} \!\!\int_{\mathbb{R}^{d}\setminus[-L,L]^{d}} |x|^p |f(x)|^{2}\,dx.
\right]
\end{aligned}
\]

Absorbing $d^{p/2}$ into the constant
\[
C_{d,p}=2^{d}S_{d,p} (\frac12 + \frac12 d^{p/2})=2^{d-1} (1 + d^{p/2})\sum_{m\in\mathbb{Z}^{d}\setminus\{0\}}|m|^{-p} < \infty,
\]
we obtain the stated bound:
\[
B_{d}(L)
\le
C_{d,p}
L^{-p} \!\!\int_{\mathbb{R}^{d}\setminus[-L,L]^{d}} |x|^p |f(x)|^{2}\,dx.
\]

With the condition $\int_{\mathbb{R}^{d}} |x|^{p}\,|f(x)|^{2}\,dx <\infty$, we have that $B_{d}(L)=O(L^{-p})$ and $B_{d}(L)\to 0$ as $L \to \infty$.
\end{proof}

\begin{remark}[Rectangular truncation domains]
The COS expansion on $[-L,L]^d$ is used only for notational simplicity.
The proof of Proposition~\ref{prop:moment-d} extends verbatim to 
rectangular truncation domains of the form
\[
[-L_1,L_1]\times\cdots\times[-L_d,L_d].
\]
In this case one partitions $\mathbb{R}^d\setminus[-L_1,L_1]\times\cdots\times[-L_d,L_d]$
into translated boxes of side lengths $2L_1,\dots,2L_d$, and the same 
Cauchy--Schwarz and blockwise Parseval arguments apply.
If $\min_i L_i\to\infty$, then $B_d(L_1,\dots,L_d)\to 0$ with the same rate
$O\!\left(\min_i L_i^{-p}\right)$.
Thus COS-admissibility does not require equal truncation lengths in each dimension.
\end{remark}

\begin{remark}
As in the one-dimensional case, this proposition implies that any multi-variate Student-t distribution is COS-admissible.
\end{remark}

\bibliographystyle{plain}
\bibliography{biblio}

@article{fang2009novel,
  title={A novel pricing method for European options based on Fourier-cosine series expansions},
  author = {Fang, Fang and Cornelis W. Oosterlee},
  journal={SIAM Journal on Scientific Computing},
  volume={31},
  number={2},
  pages={826--848},
  year={2009},
  publisher={SIAM}
}

@article{junike2022precise,
  title={Precise option pricing by the COS method—how to choose the truncation range},
  author={Junike, Gero and Pankrashkin, Konstantin},
  journal={Applied Mathematics and Computation},
  volume={421},
  pages={126935},
  year={2022},
  publisher={Elsevier}
}

@article{junike2023characteristic,
  title={From characteristic functions to multivariate distribution functions and European option prices by the damped COS method},
  author={Junike, Gero and Stier, Hauke},
  journal={arXiv preprint arXiv:2307.12843},
  year={2023}
}

@article{ruijter2012two,
  title={Two-dimensional Fourier cosine series expansion method for pricing financial options},
  author={Ruijter, Marjon J. and Cornelis W. Oosterlee},
  journal={SIAM Journal on Scientific Computing},
  volume={34},
  number={5},
  pages={B642--B671},
  year={2012},
  publisher={SIAM}
}

\end{document}